\documentclass[oribibl]{llncs}
\usepackage{amsmath,amssymb}
\usepackage{graphicx, hyperref}
\usepackage{cite}
\usepackage{wrapfig}

% Halmos box at the end of proofs.
\let\doendproof\endproof
\renewcommand\endproof{~\hfill\qed\doendproof}

\newcommand{\NP}{\mathsf{NP}}
\renewcommand{\P}{\mathsf{P}}

% Math operators
\DeclareMathOperator{\topcross}{cr}

\DeclareMathOperator{\pagecross}{cr}
\DeclareMathOperator{\pagecrossed}{cre}

% todonotes
\usepackage{todonotes}

\title{Fixed parameter tractability of crossing minimization of almost-trees}
\author{Michael J. Bannister \and David Eppstein \and Joseph A. Simons}
\institute{Department of Computer Science, University of California, Irvine}

\begin{document}
\maketitle
\begin{abstract}
We investigate exact crossing minimization for graphs that differ from trees by a small number of additional edges, for several variants of the crossing minimization problem. In particular, we provide fixed parameter tractable algorithms for the $1$-page book crossing number, the $2$-page book crossing number, and the minimum number of crossed edges in $1$-page and $2$-page book drawings.
\end{abstract}

\pagestyle{plain}

\section{Introduction}
Graphs that differ from a tree by the inclusion of a small number of edges arise in many applications; such graphs are called \emph{almost-trees}. Almost-trees can be found in the areas of biology, medicine, operations research, sociology, genealogy, distributed systems, and telecommunications, and in each of these applications it is important to find effective visualizations\footnote{We provide more details about these applications in Section~\ref{sec:apps}.}. One of the most important criteria for the aesthetics and readability of a graph drawing is its number of its crossings. Although crossing minimization problems tend to be $\NP$-complete, we may hope that the graphs arising in applications are not hard instances for these problems, allowing us to find optimal drawings for them efficiently. In this paper we prove that almost-trees are indeed not hard instances by designing algorithms for crossing minimization of almost-trees that are fixed-parameter tractable when parameterized by the number of extra non-tree edges in these graphs.

Many different variants of the crossing number have been studied, depending on what types of drawing are allowed and what we count as a crossing~\cite{PacTot-JCTB-00}. The most frequently studied is the topological crossing number, $\topcross(G)$, which counts the number of crossings in a unrestricted placement of vertices and edges in the plane. In this paper we consider also the $1$-page and $2$-page crossing numbers, denoted $\pagecross_1(G)$ and $\pagecross_2(G)$ respectively. The $1$-page crossing number counts the minimal number of crossings in a drawing where all the vertices of $G$ are placed on a straight line, and all edges must be placed to one side of the line. The $2$-page crossing number is defined similarly: all vertices of $G$ are placed on a straight line and edges may be assigned to either side of the line, but are not allowed to cross the line. In both $1$-page and $2$-page drawings it is not uncommon to place the vertices on a circle instead of a straight line; this does not change the crossing structure of the drawing. In addition to the number of crossings we consider the number of crossed edges for these drawing styles, denoted $\pagecrossed_1(G)$ and $\pagecrossed_2(G)$. 
\begin{figure}[t]
\centering
\includegraphics[height=1in]{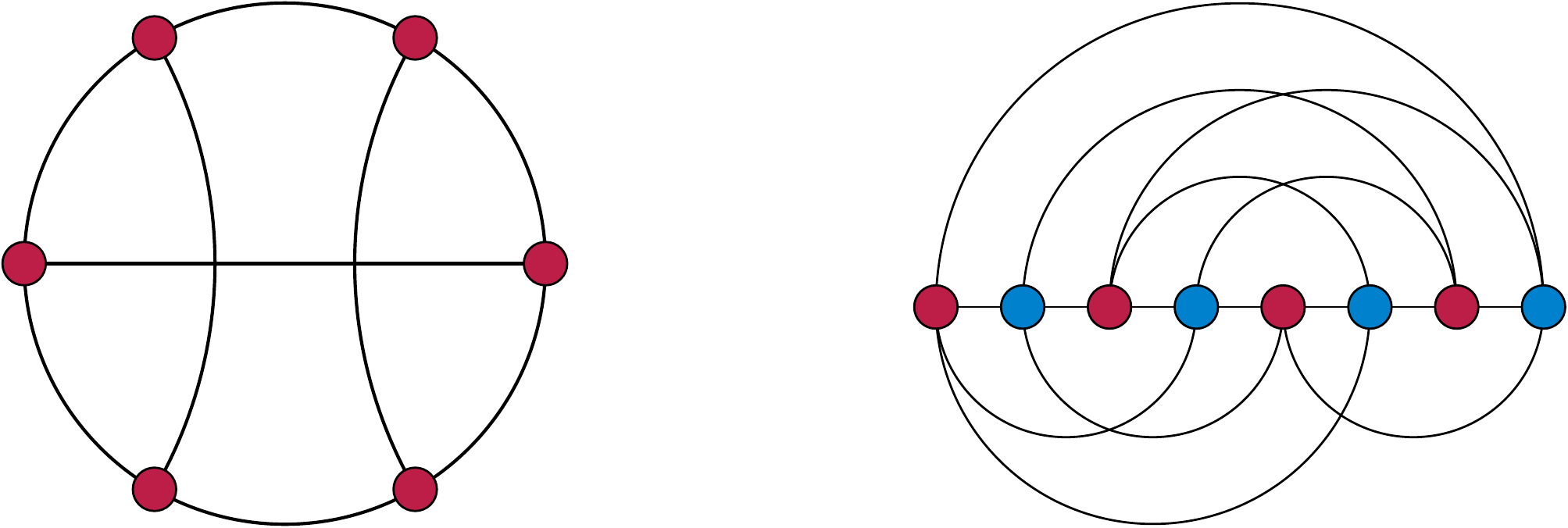}
\caption{Left: 1-page circular embedding with 
    %$\pagecross_1 =2$ and $\pagecrossed_1 = 3$.
    two crossings and three crossed edges. 
    Right: 2-page linear embedding of $K_{4,4}$ with 
    %$\pagecross_2 = 4$ and
    %$\pagecrossed_2 = 8$.
    four crossings and eight crossed edges.
\label{fig:1p2p}
}
\end{figure}

Following Gurevich, Stockmeyer and Vishkin \cite{GurStoVis-JACM-84} we define a \emph{$k$-almost-tree} to be a graph such that every biconnected component of the graph has cyclomatic number at most $k$, where the \emph{cyclomatic number} is the difference between the number of edges in a graph and in one of its maximal spanning forests. The $k$-almost-tree parameter has been used in past fixed-parameter algorithms~\cite{Fer-TSE-89, KloBodMul-ESA-93, AkuHayChi-JTB-07, JirKloKra-DAM-01, Cop-DAM-85, Bod-GTCCS-94, Fur-FUN-12}, and will play the same role in our algorithms for crossing minimization. 

Grohe and later Kawarabayashi and Reed showed the topological crossing number to be fixed parameter tractable for its natural parameter~\cite{Gro-STOC-01,KawRee-STOC-07}; the same is true for odd crossing number~\cite{PelSchSte-GD-07}. Because the topological crossing number is at most quadratic in the $k$-almost-tree parameter, $cr(G)$ is also fixed parameter tractable for $k$-almost-trees. However, to our knowledge no fixed parameter tractable algorithms were known for computing $1$-page or $2$-page crossing numbers. Indeed, for the $2$-page problem, determining whether a graph can be drawn with zero crossings is already $\NP$-complete~\cite{ChuLeiRos-SJADM-87}, so to achieve fixed parameter tractability we must use some other parameter such as the $k$-almost-tree parameter rather than using the crossing number itself as a parameter.

Our main results are that $\pagecross_1(G)$, $\pagecross_2(G)$, $\pagecrossed_1(G)$, and $\pagecrossed_2(G)$ are all fixed-parameter tractable for almost-trees.
As with previous work on parameterized algorithms for crossing numbers~\cite{Gro-STOC-01,KawRee-STOC-07,PelSchSte-GD-07}, our algorithms have a high dependence on their parameters. Making our algorithms more practical by reducing this dependence  remains an open problem.

\section{Application Domains}\label{sec:apps}
Examples of $k$-almost-trees can be found in biological gene expression networks, where vertices represent genes and edges represent correlations between pairs of genes. The $k$-almost-tree structure of such graphs has been exploited in parameterized algorithms for finding sequences of valid labelings of genes as active or inactive~\cite{AkuHayChi-JTB-07}. The parameter $k$ has also been used in algorithms for continuous facility location where weighted edges represent a road network on which to efficiently place facilities serving clients~\cite{GurStoVis-JACM-84}.  Intraprogram communication networks whose vertices represent modules of a distributed system and whose edges represent communicating pairs of modules also have an almost-tree structure that has been exploited for parameterized algorithms~\cite{Fer-TSE-89}.

A typical example of almost-trees arises when studying the spread of sexually transmitted infections, where sexual networks are constructed by voluntary survey. In these graphs vertices represent people who have received treatment, and edges represent their reported sexual parters. Analysis of these networks allows researchers to identify the growth and decline phases of an outbreak, and the general spread of the disease~\cite{PotPhiPlu-STI-02, DeSinWon-STI-04, PotMutRot-STI-02}. 

Another type of social network represents the business dealings of individuals and business entities. Examples of these networks can be found in the art of Mark Lombardi, an artist famous for his drawings of networks connecting the key players of conspiracy theories~\cite{LomHob-03}. Many of Lombardi's networks show an almost-tree structure; the Lippo Group Shipping network listed below is one.

The directed acyclic graphs originating from genealogical data where edges
represent parental relationships on the vertices are another example of
$k$-almost-trees, when viewed as undirected graphs, since in modern societies
marriage between close relatives is rare. Similar types of graphs also come from
animal pedigrees, academic family trees, and organizational lineages~\cite{ButActMar-JSS-08}.

Utility networks such as telecommunication networks and power grids also form an almost-tree structure, where additional edges beyond those of a spanning tree provide load balancing and redundancy. Since such links are expensive they are placed in the network sparingly.

\begin{figure}[b!]
    \centering
    \includegraphics[width=2in]{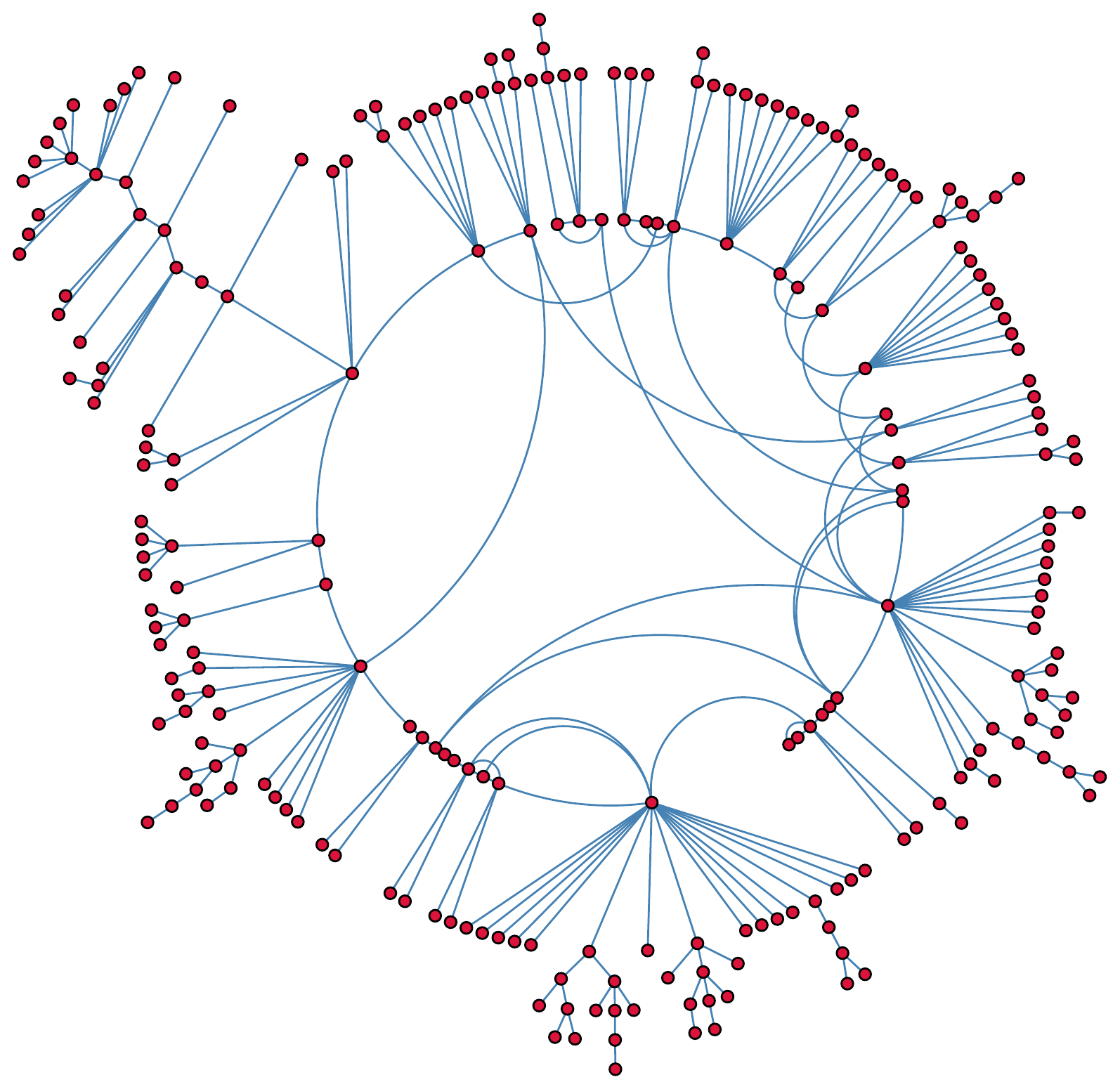}\hfill
    \includegraphics[width=2in]{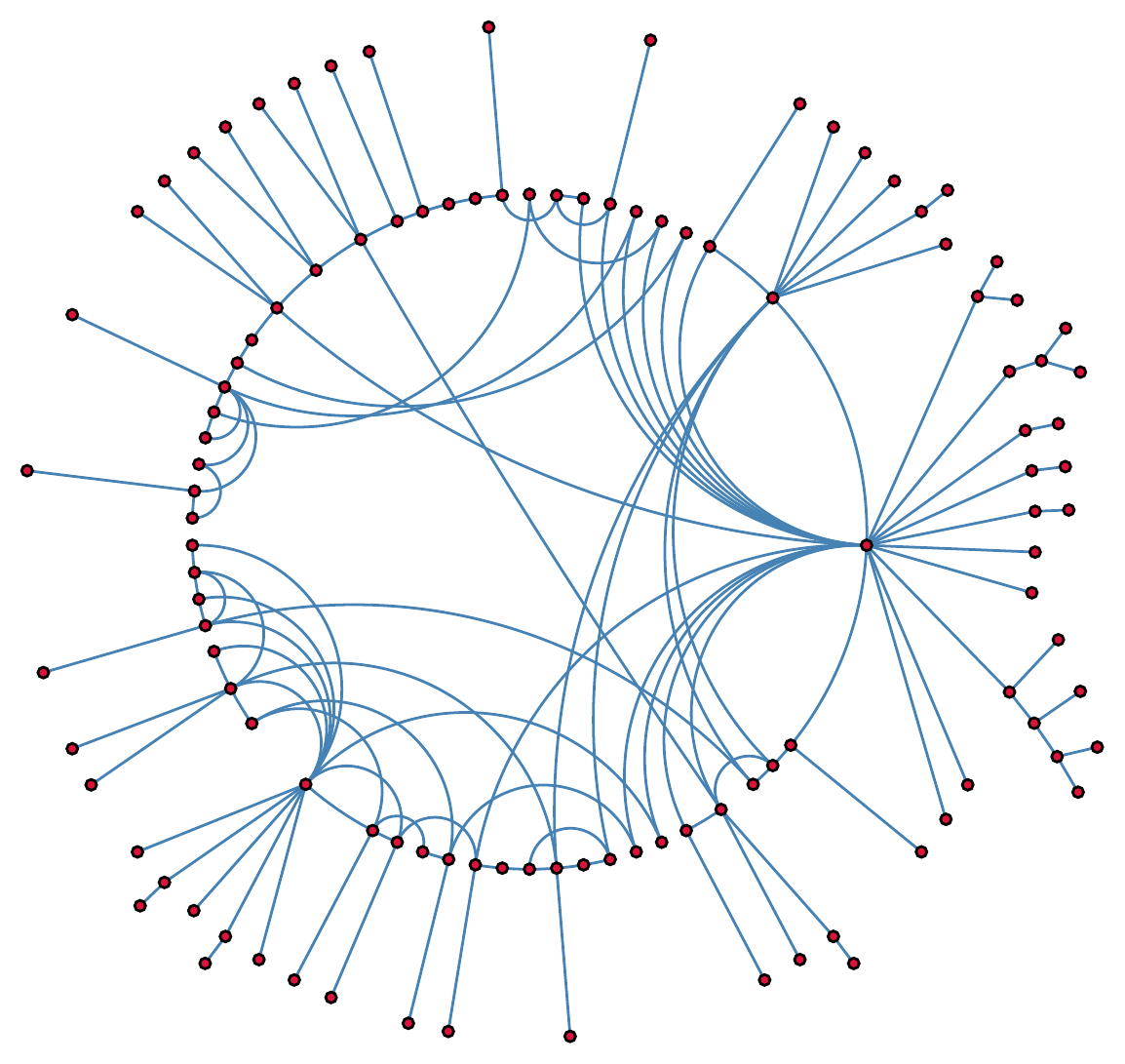}
    \caption{Two sunburst drawings. Left: An HIV infection graph. Right: Lombardi's World Finance Miami graph.}
    \label{fig:sunburst}
\end{figure}

In order to visually distinguish the tree-like parts of these graphs from the parts with nontrivial connectivity, we may use a \emph{sunburst} style (Figure~\ref{fig:sunburst}) in which the \emph{$2$-core} of the graph (the part of a graph which is left after repeatedly removing all degree one vertices~\cite{Sei-SN-83}) is drawn with a one-page circular layout and the rest of the graph extends outwards using a radial layout on concentric circles. In this style, crossings occur only within the inner one-page layout, motivating our interest in crossing minimization for one-page drawings of almost-trees. 

We collect statistics for several real world graphs in Table~\ref{tab:stats}. The table shows vertex and edge counts ($n$ and $m$), the cyclomatic number $a = m - n + 1$, the $k$-almost-tree parameter~$k$, and the vertex and edge counts for the $2$-core ($n_2$ and $m_2$). For most of these graphs the parameters $a$ and $k$ are low.

\vspace{-0.1in}
\begin{table}
\centering
\begin{tabular}{lcccccc}
Name& $n$& $m$& $a$& $k$& $n_2$& $m_2$\\\hline
Gonorrhoea sexual network 1~\cite{DeSinWon-STI-04}& 38& 39& 2& 2& 9& 10\\
Gonorrhoea sexual network 2~\cite{DeSinWon-STI-04}& 84& 90& 7& 4& 22& 28\\
Lippo Group Shipping~\cite{LomHob-03}& 96& 112& 17& 16& 45& 61\\ %Lombardi
\begin{tabular}[x]{@{}l@{}}
Global International Airways\\
and Indian Springs State Bank~\cite{LomHob-03}
\end{tabular}
     & 82& 99& 18& 15& 33& 50\\ %Lombardi
Gondola Genealogy~~\cite{NooMrvBat-12}& 242& 255& 14& 14& 50& 63\\
HIV~\cite{PotPhiPlu-STI-02}& 243& 257& 15& 12& 39& 53\\
Power Grid~\cite{WatStr-Nat-98}& 4941& 6594& 1654& 1516& 3353& 5006\\
\end{tabular}
\medskip
\caption{Statistics for real-world almost-trees.}
\label{tab:stats}
\end{table}
\vspace{-0.2in}

\section{The kernel}\label{sec:prelim} \label{sec:kernel}
Our fixed-parameter algorithms use the \emph{kernelization method}. In this
method we find a polynomial time transformation from an arbitrary input instance to a \emph{kernel}, an instance whose size is bounded by a fixed function $f(k)$ of the parameter value, and then apply a non-polynomial algorithm to the kernel. In this section we outline the general method for kernelization that we use in
our fixed parameter algorithms, based on a similar kernelization by Bannister, Cabello and Eppstein~\cite{BanCabEpp-WADS-2013} for a different problem, $1$-planarity testing.

We first describe our kernelization for cyclomatic number, which starts by
reducing the graph to its $2$-core.  The $2$-core of a graph can be found in linear time by initializing a queue of degree one vertices, repeatedly finding and removing vertices from the queue and the graph, and updating the degree and queue membership of the neighbor of each removed vertex.The $2$-core consists of vertices of degree at least three connected to each other by paths of degree two vertices. The following lemma bounds the numbers of high degree vertices and maximal paths of degree two vertices.

\begin{lemma}
    \label{lem:2core-bound}
If $G$ is a graph with cyclomatic number $k$ and minimum degree three
then $G$ has at most $2k - 2$ vertices and at most $3k-3$ edges. Furthermore, this bound is tight. As a consequence, the $2$-core of a graph with cyclomatic number $k$ has at most $2k-2$ vertices of degree at least three, and at most $3k-3$ maximal paths of degree two vertices.
\end{lemma}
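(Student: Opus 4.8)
The plan is to combine the handshake lemma with the definition of the cyclomatic number to get the numerical bounds, and then to deduce the statement about the $2$-core by suppressing degree-two vertices so that it reduces to the minimum-degree-three case already handled.

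First I would establish the bounds on the vertex count $n$ and edge count $m$. Assume $G$ is connected, so its cyclomatic number is $k = m - n + 1$. Since every vertex has degree at least three, the handshake lemma gives $2m = \sum_v \deg(v) \ge 3n$, hence $m \ge \tfrac{3}{2} n$. Substituting $m = n + k - 1$ yields $n + k - 1 \ge \tfrac{3}{2} n$, i.e.\ $n \le 2(k-1) = 2k-2$, and then $m = n + k - 1 \le 3k - 3$. For a graph with $c \ge 1$ components the cyclomatic number is $m - n + c$, and the same computation gives $n \le 2(k-c) \le 2k-2$ and $m \le 3(k-c) \le 3k-3$, so the connected case is the extremal one.

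For tightness I would exhibit a family in which both inequalities hold with equality simultaneously. Equality in the handshake bound forces $G$ to be cubic, and any connected cubic graph on $n$ vertices has $m = \tfrac{3}{2}n$ and cyclomatic number $k = \tfrac{1}{2}n + 1$, so $n = 2k-2$ and $m = 3k-3$ exactly; the smallest example is $K_4$, with $k = 3$, and larger cubic graphs realize the remaining attainable values of $k$. Finally, for the consequence I would pass from the $2$-core $C$ to the multigraph $H$ obtained by suppressing every degree-two vertex, that is, replacing each maximal path of degree-two vertices by a single edge joining its two endpoints. Suppressing one degree-two vertex lowers both the vertex and edge counts by one and leaves the number of components unchanged, so $H$ inherits the cyclomatic number $k$ of $C$, which equals that of $G$ because deleting a degree-one vertex likewise preserves $m - n + c$. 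The vertices of $H$ are exactly the degree-at-least-three vertices of $C$, and they keep that degree in $H$, so $H$ has minimum degree three; applying the first part of the lemma (whose counting argument is insensitive to loops and parallel edges) bounds their number by $2k-2$. Each maximal path of degree-two vertices corresponds to a distinct edge of $H$, so the number of such paths is at most the edge count $3k-3$.

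The step I expect to require the most care is the consequence, specifically checking that the counting argument survives the passage to $H$: I must verify that suppression genuinely preserves the cyclomatic number (keeping careful track of the component count under both leaf-deletion and degree-two suppression) and that the handshake inequality still holds when $H$ carries self-loops, each contributing two to a degree, or parallel edges created by short paths. The degenerate case in which a component of $C$ is a single cycle, contributing no degree-three vertices and suppressing down to a lone self-loop of degree two, should be treated separately so that the minimum-degree-three hypothesis genuinely applies.
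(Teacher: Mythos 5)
Your proposal is correct and takes essentially the same route as the paper: the handshake/double-counting inequality $2m \ge 3n$ combined with $m = n + k - 1$ gives $n \le 2k-2$ and $m \le 3k-3$, cubic graphs witness tightness, and the $2$-core consequence follows by suppressing degree-two vertices into edges of a minimum-degree-three multigraph. You are in fact more careful than the paper's terse proof, which handles the edge bound via a spanning-tree count, leaves the suppression argument for the consequence entirely implicit, and does not mention the disconnected case, loops and parallel edges, or the degenerate pure-cycle components that you correctly flag.
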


\begin{proof}
Double counting yields $2(n-1+k) \geq 3n$, simplifying to $n \leq 2k-2$. A spanning tree of $G$ has at most $2k-3$ edges, and there are $k$ edges outside the tree, from which the bound on edges follows. For a graph realizing the upper bound consider any cubic graph with $2k-2$ vertices.
\end{proof}

The final step in this kernelization is to reduce the length of the maximal
degree two paths. 
Depending on the specific problem, we will determine a maximal allowed path length $\ell(k)$, and if any paths exceed this length we will shorten them to length exactly $\ell(k)$.
After this step the kernel will have $O(k\ell(k))$ edges and vertices, bounded by a function of $k$.

To change the parameter of our algorithms from the cyclomatic number to the $k$-almost-tree parameter, we first decompose the graph into its biconnected components. These components have a tree structure and in most drawing styles they can be embedded separately without introducing crossings. We then kernelize and optimally embed each biconnected component individually.

\section{$1$-page crossing minimization}
Minimizing crossings in $1$-page drawings is important for several drawing styles, but is \textsf{NP}-hard~\cite{MasKasNakFuj-ISCS-87}, leading Baur and Brandes to develop fast practical heuristics for reducing but not optimizing the number of crossings~\cite{BauBra-GTCCS-05}.
As we now show, crossing minimization and crossed edge minimization in $1$-page drawings of $k$-almost-trees is fixed-parameter tractable in the parameter $k$. We use the kernelization of Section~\ref{sec:kernel}, keeping one vertex per maximal degree two path.  

\begin{lemma}\label{lem:1-page-kernel}
Let $G$ have cyclomatic number $k$ and let $K$ be the kernel constructed from
$G$ with $\ell(k)=2$. Then
% $\ell(k)$ is the length of the path. So length 2 means the path contains 1
% vertex.
\begin{enumerate}
\item
$K$ has at most $5k$ vertices and $6k$ edges;
\item
$\pagecross_1(G) = \pagecross_1(K)$;
\item
$\pagecrossed_1(G) = \pagecrossed_1(K)$.
\end{enumerate}
\end{lemma}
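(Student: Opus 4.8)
The plan is to separate the size bound from the two crossing identities, and to obtain the identities from two independent reductions: deleting the pendant trees outside the $2$-core, and contracting long degree-two paths. Part (1) is pure counting. By Lemma~\ref{lem:2core-bound} the $2$-core has at most $2k-2$ vertices of degree at least three; suppressing its degree-two vertices gives a multigraph of minimum degree three and cyclomatic number $k$, so the same lemma bounds its edges---equivalently the number of maximal degree-two paths---by $3k-3$. With $\ell(k)=2$ each such path is replaced by a path with one interior vertex and two edges (one interior vertex is kept, not zero, so that $K$ stays simple). Since every edge of $K$ lies on exactly one of these paths, $K$ has at most $(2k-2)+(3k-3)=5k-5\le 5k$ vertices and at most $2(3k-3)=6k-6\le 6k$ edges.

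For parts (2) and (3) I would show that neither reduction changes the objective, viewing a $1$-page drawing as a cyclic vertex order with edges drawn as chords, so that two edges cross exactly when their endpoints alternate. The pendant trees are easy: deleting edges never increases crossings or crossed edges, and conversely inserting the vertices of each pendant tree as a contiguous block of the cyclic order next to its attachment vertex lets its edges be drawn as nested chords confined to that block, crossing nothing; so the trees are never crossed and may be discarded for both objectives. For the paths, the crucial normal form is that in an optimal drawing the interior vertices of each maximal degree-two path may be taken to occupy a contiguous arc. Once a path is clustered this way its interior edges lie in a tiny arc and are crossed by nothing; only the two edges joining the cluster to the endpoints $u$ and $v$ can be crossed, and they meet any other chord exactly as the two edges $up,pv$ of a length-two path at the same position would. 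Thus a clustered path of any length $\ge 2$ has the same number of crossings and the same set of crossed edges as its length-two replacement, and expanding a length-two path back into a clustered block reverses this, giving both identities.

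The main obstacle is proving that an optimal drawing can be put into this clustered normal form without increasing either objective. I would argue by an exchange move: from an optimal cyclic order, take a maximal degree-two path whose interior vertices are not yet consecutive and slide them together along the path, reinserting each degree-two vertex immediately beside a path-neighbor, checking that this creates no new alternating pair of endpoints and hence no new crossing. Two points need care: the moves must be ordered so the path never self-crosses, and for the crossed-edge objective I must verify that collapsing the never-crossed interior edges cannot promote a previously uncrossed endpoint edge to a crossed one. Making these local moves rigorous, rather than the counting or the pendant-tree argument, is where I expect the real difficulty to be.
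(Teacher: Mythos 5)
Your part (1) is the paper's own counting, and your pendant-tree step is also exactly the paper's: delete the trees in one direction, and in the other re-expand each tree (drawn nested, hence uncrossed) in a small neighborhood of its attachment vertex. The gap is in the path-shortening half of (2) and (3), and it sits precisely at the step you flag but do not carry out. Your exchange move, ``reinsert a degree-two vertex immediately beside a path-neighbor, checking that this creates no new alternating pair,'' is false as stated: the check depends on \emph{which side} of the neighbor the vertex is reinserted on. Concretely, let $v$ have path-neighbors $u$ and $w$, let $u$ have degree two with other neighbor $t$, and suppose $t$ lies on the circle strictly inside the arc between $u$ and $v$'s old position. The chord $ut$ crossed neither $uv$ (shared endpoint) nor $vw$ (no alternation), but if $v$ is reinserted next to $u$ on the $t$-side, the new chord from $v$'s position to $w$ alternates with $\{u,t\}$, creating a crossing out of nothing. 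The move is safe only on the side of $u$ away from $u$'s other edge, and the existence of a safe side is exactly why the paper's reduction insists that \emph{both} $u$ and $v$ have degree two: if $u$ had edges into both arcs, neither placement would work. The paper packages this correctly in one stroke---keep $u$ fixed, move $v$ next to $u$, and \emph{reroute} $f=vw$ along the former course of $e\cup f$---so that every crossing of the rerouted edge corresponds to a former crossing of $e$ or $f$, the shrunken edge $uv$ is uncrossed, and it can be contracted; iterating shortens every maximal degree-two path to length two. Your plan needs this side-choice (or rerouting) lemma made explicit; without it, the ``no new alternating pair'' verification is precisely where the argument breaks.

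Two smaller points of comparison. First, your global ``clustered normal form'' (make all interior vertices of a path contiguous, then swap the cluster for one vertex) is heavier than needed: the paper never clusters a whole path, it just alternates one safe move with one contraction, which also disposes of your worry about ordering the moves so the path does not self-cross. Second, once the move is stated in the rerouting form, your crossed-edges concern in (3) dissolves: all crossings of $e$ transfer onto the rerouted $f$, so no previously uncrossed edge gains a crossing, and at worst two crossed edges merge into one---which is why the paper can dismiss (3) as ``minor modification.''
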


\begin{proof}
(1) After reducing a graph with cyclomatic number $k$ to its $2$-core and reducing all maximal degree two paths to single edges we have a graph with $2k-2$ vertices and $3k-3$ edges, by Lemma~\ref{lem:2core-bound}. Since we are then adding one vertex back to every path that was not a single edge in the original graph, $K$ has at most $5k-5 \leq 5k$ vertices and $6k-6 \leq 6k$ edges.

(2) First we show that $\pagecross_1(G) \leq \pagecross_1(K)$. Suppose that $K$ has been embedded in one page with the minimum number of crossings. Every degree two vertex $v$ in $K$ corresponds to a path of degree two vertices in $G$. We can expand this path in a small neighborhood of $v$ without introducing any new crossings. After expanding all degree two paths we have an embedding of the $2$-core of $G$. Now each of the remaining vertices corresponds to a tree in $G$. Since trees can be embedded in one page without crossings, we can expand each tree in a small neighborhood of its corresponding vertex without introducing further crossings.

Now we show that $\pagecross_1(K) \leq \pagecross_1(G)$. Suppose that $G$ is embedded on one page with minimum crossings. Reduce $G$ to its $2$-core; this does not increase crossings. Let $u$ and $v$ be two adjacent degree-two vertices of $G$, let $e$ be the edge between $u$ and $v$ and let $f$ be the edge from $v$ not equal to $e$. Now, change the embedding of $G$ by keeping $u$ fixed and moving $v$ next to $u$, rerouting $f$ along the former path used by both $e$ and $f$. This change moves all crossings from $e$ to $f$ but does not create new crossings, so it produces another minimum-crossing embedding. After this change, edge $e$ may be contracted, again without changing the crossing number. Repeatedly moving one of two adjacent degree-two vertices and then contracting their connecting edge eventually produces an embedding of $K$ whose crossing number equals that of $G$.

(3) Follows from the proof of (2) with minor modification.
\end{proof}

\begin{lemma}
If $G$ is a graph with $n$ vertices and $m$ edges, then $\pagecross_1(G)$ and $\pagecrossed_1(G)$ can be computed in $O(n!)$ time.
\end{lemma}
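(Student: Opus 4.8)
The plan is to prove this by a brute-force search over all circular (equivalently, linear) vertex orderings. The key observation is that in a one-page drawing, the number of crossings and the set of crossed edges are completely determined by the cyclic order of the vertices around the circle: once the vertices are fixed, each edge is a chord, and two edges cross if and only if their endpoints alternate around the circle. Thus the geometric freedom in placing the edges is illusory, and a one-page drawing reduces to a combinatorial choice of a cyclic permutation of the vertices.

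First I would fix a vertex to break rotational symmetry and enumerate all $(n-1)!$ remaining orderings, or more simply just bound the enumeration by the $n!$ linear orderings of the vertices on the line. For each ordering I would compute, for every pair of edges, whether their endpoints interleave in the ordering; this test takes constant time per pair once the positions of the vertices are recorded, so evaluating a single ordering costs $O(m^2)$ time. Counting the interleaving pairs gives the number of crossings for that ordering, and marking every edge that participates in at least one interleaving gives the set of crossed edges, whose cardinality is $\pagecrossed_1$ for that ordering. I would then take the minimum of the crossing count over all orderings to obtain $\pagecross_1(G)$, and the minimum of the crossed-edge count to obtain $\pagecrossed_1(G)$.

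The main technical point to justify is the reduction from arbitrary one-page drawings to cyclic orderings, namely that an optimal drawing need only be sought among those in which edges are drawn as chords with crossings determined purely by endpoint interleaving. This follows from the standard fact that in a one-page (outerplanar-style) layout, any two edges on the same side of the spine cross an odd number of times exactly when their endpoints alternate, so a crossing-minimal drawing can always be realized by straightening each edge into a single arc; redundant crossings from edges meeting more than once can be removed without increasing the count. Granting this, the search is exhaustive over all combinatorially distinct drawings and therefore returns the true optima.

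The running time is dominated by the outer loop over the $n!$ orderings, with polynomial work inside each iteration, giving $O(n! \cdot \mathrm{poly}(n,m))$, which is $O(n!)$ up to the polynomial factor absorbed into the factorial growth; I would state the bound as $O(n!)$ with the understanding that the per-ordering cost is polynomial. The main obstacle is not the algorithm itself but the clean justification that restricting to straight-chord drawings loses no generality for either objective; once that combinatorial characterization of one-page crossings is in hand, correctness and the time bound are immediate.
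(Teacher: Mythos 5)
Your reduction of the problem to cyclic vertex orderings, and the interleaving characterization of crossings, is correct and is indeed the (implicit) combinatorial foundation of the paper's proof as well. However, your algorithm does not achieve the stated bound. You evaluate each of the $n!$ (or $(n-1)!$) orderings from scratch at a cost of $O(m^2)$ per ordering, for a total of $O(n!\,m^2)$ time, and your closing claim that this ``is $O(n!)$ up to the polynomial factor absorbed into the factorial growth'' is false: $n!\,m^2$ is not $O(n!)$, since $m$ can be as large as $\Theta(n^2)$ and even a single extra factor of $n$ already exceeds $n!$ asymptotically. If one were honest about your bound, the downstream Theorem~\ref{thm:1page} would read $O((5k)!\,\mathrm{poly}(k)+n)$ rather than $O((5k)!+n)$ --- a similar flavor of result, but not the lemma as stated.

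The missing idea, which is the entire content of the paper's proof, is incremental evaluation via a Gray code for permutations. The paper fixes one vertex (exploiting cyclic symmetry) and uses the Steinhaus--Johnson--Trotter algorithm to enumerate the $(n-1)!$ permutations of the remaining vertices so that consecutive permutations differ by a single transposition of adjacent elements. When such a swap exchanges $u$ and $v$, the only affected crossing pairs involve an edge incident to $u$ and an edge incident to $v$, so the crossing count (and, with a per-edge counter of how many edges cross it, the crossed-edge count) can be updated from the previous value in $O(\deg(u)+\deg(v))=O(n)$ time, following Baur and Brandes. This gives $(n-1)!$ iterations at $O(n)$ each, i.e., a genuine $O(n!)$ total. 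So your proposal is correct as an algorithm and as a proof that the quantities are computable by exhaustive search, but it proves a strictly weaker running-time bound; to close the gap you need the adjacent-transposition enumeration together with the $O(n)$ update step.
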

\begin{proof}
We place the vertices in an arbitrary order on a circle, and compute the number of crossings or crossed edges for this layout. Then we use the Steinhaus--Johnson--Trotter algorithm~\cite{Sed-ACS-77} to list the $(n-1)!$ permutations of all but one vertex efficiently, with consecutive permutations differing by a transposition. When a transposition swaps $u$ and $v$, the number of crossings (or crossed edges) in the new layout can be updated from its previous value in $O(\deg(u) + \deg(v)) = O(n)$ time, as in~\cite{BauBra-GTCCS-05}. This yields a total run time of $O(n!)$.
\end{proof}

Combining the above lemmas, we apply the non-polynomial time algorithm only on
the kernel of the graph to achieve the following fixed parameter result.
\begin{theorem} \label{thm:1page}
If $G$ is a graph with cyclomatic number $k$, then $\pagecross_1(G)$ and $\pagecrossed_1(G)$ can be computed in $O((5k)! + n)$ time. If $G$ is a $k$-almost-tree, then $\pagecross_1(G)$ and $\pagecrossed_1(G)$ can be computed in $O((5k)!n)$ time.
\end{theorem}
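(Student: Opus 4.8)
The plan is to combine the kernelization of Lemma~\ref{lem:1-page-kernel} with the brute-force algorithm of the preceding lemma, treating the two parameter regimes separately.

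For the first statement, where $G$ has cyclomatic number $k$, I would first construct the kernel $K$ with $\ell(k)=2$. Building $K$ requires reducing $G$ to its $2$-core and then contracting each maximal degree-two path down to a single retained vertex; as described in Section~\ref{sec:kernel}, the $2$-core is computable in linear time and the path contractions can be carried out in a single linear-time pass, so $K$ is produced in $O(n)$ time. By Lemma~\ref{lem:1-page-kernel}(1), $K$ has at most $5k$ vertices, so invoking the brute-force lemma on $K$ runs in $O((5k)!)$ time and returns $\pagecross_1(K)$ and $\pagecrossed_1(K)$. By parts (2) and (3) of Lemma~\ref{lem:1-page-kernel}, these equal $\pagecross_1(G)$ and $\pagecrossed_1(G)$. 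Adding the $O(n)$ kernelization cost to the $O((5k)!)$ search cost yields the claimed $O((5k)! + n)$ bound.

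For the second statement, where $G$ is a $k$-almost-tree, the crucial structural fact is that the $1$-page crossing number is additive over biconnected components. I would decompose $G$ into its blocks $B_i$ using a standard linear-time block-cut-tree computation. The tree structure of the blocks lets us choose a circular vertex ordering in which each block occupies a contiguous arc, with cut vertices placed at the shared boundaries between arcs of adjacent blocks; since distinct blocks then span disjoint arcs meeting only at shared endpoints, no edge of one block can cross an edge of another. Conversely, any $1$-page drawing of $G$ restricts to a valid drawing of each block. Hence $\pagecross_1(G) = \sum_i \pagecross_1(B_i)$, and likewise for $\pagecrossed_1$. Each block has cyclomatic number at most $k$ by definition of a $k$-almost-tree, so the first statement computes its value in $O((5k)! + n_i)$ time, where $n_i$ is its size. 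Summing over the $O(n)$ blocks and using $\sum_i n_i = O(n)$ gives the total running time $O((5k)!\,n)$.

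The main obstacle is the additivity claim underlying the second statement. The lower-bound direction is immediate by restriction, but the upper bound requires exhibiting a single global ordering that simultaneously realizes the optimal per-block orderings without inter-block interference. The essential observation is that the blocks' optimal circular orderings can be nested along the block-cut tree: one recursively lays out the subtree hanging off each cut vertex inside a small arc adjacent to that vertex, preserving the internal order of each block while confining different blocks to disjoint arcs. Verifying that this nesting never forces an edge of one block to separate the endpoints of an edge of another is the one place where a little care is needed; everything else is bookkeeping on the running times.
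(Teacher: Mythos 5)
Your proposal is correct and follows essentially the same route as the paper: kernelize via Lemma~\ref{lem:1-page-kernel}, run the factorial-time permutation search on the at-most-$5k$-vertex kernel, and for $k$-almost-trees decompose into biconnected components (each of cyclomatic number at most $k$ by definition) and solve each block separately, exactly as sketched in Section~\ref{sec:kernel}. Your nested-arc justification that $\pagecross_1$ and $\pagecrossed_1$ are additive over blocks fills in a detail the paper leaves implicit (``these components \ldots can be embedded separately without introducing crossings''), but it is the intended argument rather than a different approach.
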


In Section~\ref{sec:matmult} we show how to improve the base of the factorial in this bound by applying fast matrix multiplication algorithms.

\section{$2$-page crossing minimization}
In this section we consider the problem of $2$-page crossing minimization. I.e.,
we seek a circular arrangement of the vertices of a graph $G$, and an assignment
of the edges to either the interior or exterior of the circle, such that the
total number of crossings is minimized.
As in the $1$-page case, we consider minimizing both the number of
crossings  $\pagecross_2(G)$ and the
number of crossed edges $\pagecrossed_2(G)$.

There are two sources of combinatorial complexity for this problem, the vertex ordering and the edge assignment. However, even when the vertex ordering is fixed, choosing an edge assignment to minimize crossings is $\NP$-hard~\cite{mnkf-cmleg-90}. The hard instances of this problem can be chosen to be perfect matchings (with $k$-almost-tree parameter zero), so unless $\P=\NP$ there can be no FPT algorithm for the version of the problem with a fixed vertex ordering. Paradoxically, we show that requiring the algorithm to choose the ordering as well as the edge assignment makes the problem easier. A straightforward exact algorithm considers all $2^m(n-1)!$ possible configurations and chooses the one minimizing the total number of crossings, running in $O(2^mn!)$ time. We will combine this fact with our kernelization to produce an FPT algorithm.

We will give a sequence of reduction rules that transforms any drawing of $G$ into a drawing with the same number of crossings and crossed edges, in which the lengths of all paths are bounded by a function $f(k)$ of the parameter $k$. These reductions will justify the correctness of our kernelization using the same function.
Our reductions are based on the observation that, if $uv$ is an uncrossed edge, and $u$ and $v$ are consecutive vertices on the spine, then edge $uv$ can be contracted without changing the crossing number or number of uncrossed edges.
A given layout may not have any uncrossed edges connecting consecutive vertices, but we will show that, for a graph with a long degree two path, the layout can be modified to produce edges of this type without changing its crossings.

\begin{lemma}\label{lem:crossed-bound}
Let $G$ be a graph with cyclomatic number $k$. 
Then there exists a 2-page drawing with at most $k$ crossed edges, and at most 
${ k \choose 2}$ crossings.
\end{lemma}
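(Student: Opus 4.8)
The plan is to exhibit an explicit 2-page drawing in which one page carries a crossing-free drawing of a spanning forest and the other page carries the $k$ extra edges. First I would fix a maximal spanning forest $T$ of $G$; by the definition of the cyclomatic number used in this paper, exactly $k$ edges of $G$ lie outside $T$. Call these the \emph{non-tree edges}.

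Next I would choose the circular (spine) order of the vertices. For each tree of the forest I would root it arbitrarily and list its vertices in depth-first preorder, concatenating these lists across the components to obtain a single ordering on the spine. The key property of a preorder is that the vertices of any subtree occupy a contiguous block of the ordering, so for any two tree edges the intervals they span are either nested or disjoint, and never interleave. Drawing every tree edge as an arc on the interior page therefore produces no crossings among tree edges; this is just the standard fact that trees, being outerplanar, admit a crossing-free 1-page book embedding.

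I would then assign all $k$ non-tree edges to the exterior page. Since the two pages are separated by the spine, no tree edge can cross a non-tree edge, and we have already arranged that no two tree edges cross. Hence every crossing in the drawing is between two non-tree edges, and every crossed edge is a non-tree edge. There are only $k$ non-tree edges, giving at most $k$ crossed edges. Moreover, in a book drawing two arcs on the same page cross at most once, so the $k$ non-tree edges participate in at most $\binom{k}{2}$ crossings.

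The construction itself is short, so the only point requiring real care --- and the one I would write out in full --- is the claim that the depth-first preorder yields a non-crossing single-page layout of the forest. This reduces to the interval nesting/disjointness observation above, together with the (routine) check that arcs sharing an endpoint do not cross. Once that is established, the two numerical bounds follow immediately by counting the $k$ non-tree edges and their pairs.
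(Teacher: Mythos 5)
Your proposal is correct and follows essentially the same approach as the paper: place a crossing-free one-page drawing of a maximal spanning forest on one page and the $k$ remaining edges on the other, so that only those $k$ edges can be crossed, giving at most $k$ crossed edges and at most $\binom{k}{2}$ crossings. The paper's proof is just a terser version of yours, leaving implicit the depth-first preorder argument that you spell out for why the forest embeds without crossings.
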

\begin{proof}
Remove $k$ edges from $G$ to produce a forest, $F$.
Draw $F$ without crossings on one page, and draw the remaining $k$ edges on the other page.
Only the $k$ edges in the second page may participate in a crossing.
\end{proof}

We classify the possible configurations of pairs of consecutive edges of a degree two
path, up to horizontal and vertical symmetries, into four possible types: \emph{m, s, rainbow, and spiral}, as depicted in
Figure~\ref{fig:path-types}.
\begin{figure}[ht]
    \centering
    \includegraphics[width=.95\textwidth]{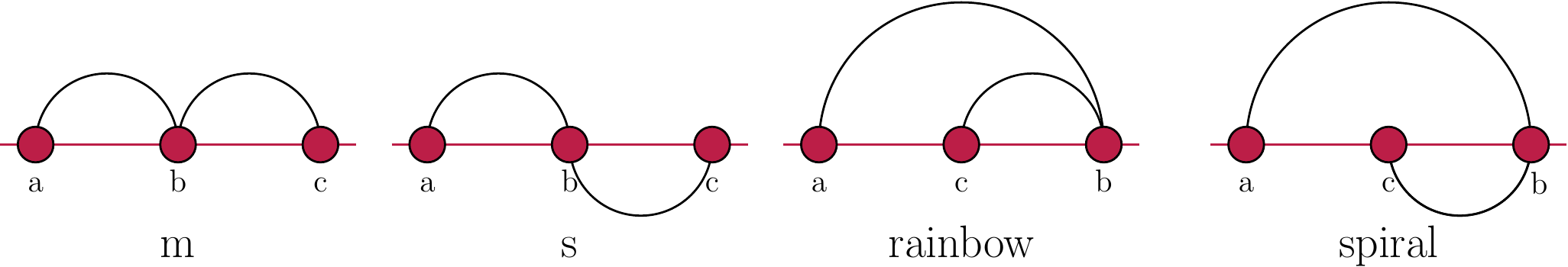} 
    \caption{Up to horizontal and vertical symmetry, the only possible
        arrangements of two consecutive edges are m, s, rainbow, and spiral.
        \label{fig:path-types}
    }
\end{figure}

\begin{lemma}\label{lem:mr}
If a layout contains a pair of edges $ab$ and $bc$ of m or rainbow type with edge $bc$ uncrossed and with $b$ and $c$ both having degree two, then it can be reduced without changing its crossings by a rearrangement followed by a contraction of the edge $bc$.
\end{lemma}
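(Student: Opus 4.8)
The plan is to reduce the configuration to the elementary operation already justified earlier in this section: contracting an uncrossed edge whose two endpoints are consecutive on the spine. Thus it suffices to exhibit a rearrangement that preserves both the number of crossings and the set of crossed edges, after which $b$ and $c$ are consecutive on the spine and $bc$ is still uncrossed; the contraction then follows immediately. The single structural fact I would exploit about the m and rainbow types, as opposed to the s and spiral types of Figure~\ref{fig:path-types}, is that in both of them the two edges $ab$ and $bc$ lie on the \emph{same} page.

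First I would describe the rearrangement. Since $b$ has degree two, its only edges are $ab$ and $bc$, both on (say) the top page. I would relocate $b$ to the spine position immediately next to $c$, on the side facing the interior of the arc $bc$, and redraw $ab$ as a single top-page curve that follows the old route of $ab$ up to $b$ and then runs alongside the old arc $bc$ to reach the new position of $b$ beside $c$. Because $ab$ and $bc$ share the top page, this rerouted curve stays on one page and is a legal top-page edge. The arc $bc$ itself becomes a short arc between the now-consecutive vertices $b$ and $c$, and since nothing lies between them it remains uncrossed.

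The verification I would then carry out is that this rearrangement changes no crossings. Using the characterization that two edges on a common page cross exactly when their endpoints interleave along the spine, the new $ab$ meets any edge $e$ in precisely the crossings that the old $ab$ and the old $bc$ together had with $e$: the portion of the new route running alongside $bc$ contributes only the crossings of $bc$ with $e$, and $bc$ is uncrossed, so the new $ab$ inherits exactly the crossings of the old $ab$. Equivalently, the uncrossed arc $bc$ bounds a region of the page met by no edge, so sliding $b$ along it introduces nothing new and destroys nothing. Hence the total number of crossings, and the set of crossed edges, are unchanged, and the already-justified contraction of the uncrossed consecutive edge $bc$ completes the reduction; the crossed-edge bookkeeping is identical, since no crossing is created or removed.

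The step I expect to be the main obstacle is this last verification in the \emph{rainbow} case, where $b$ is an extreme of the three vertices $a,b,c$ and the arcs $ab$ and $bc$ are nested. There the corridor $a \to b \to c$ is not monotone along the spine, so I cannot simply replace the rerouted $ab$ by the straight book arc between $a$ and the new position of $b$ and read off its crossings from endpoint interleaving; instead I must argue directly with the curve that hugs the old $ab \cup bc$, and in particular control the edge of the continuing degree-two path at $c$, which can re-enter the region bounded by the nested arcs. Handling this re-entry carefully, using only that $bc$ is uncrossed and that $ab$ and $bc$ occupy a single page, is where the m/rainbow hypothesis does its real work and where the analogous argument for the s and spiral types would break down.
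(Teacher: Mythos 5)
Your overall strategy is the paper's: slide $b$ along the uncrossed edge $bc$ to a spine position adjacent to $c$, argue that this preserves all crossings because the corridor along $bc$ is crossing-free, and then contract the now-consecutive uncrossed edge $bc$. However, there is a genuine gap in your rearrangement: you place $b$ ``on the side facing the interior of the arc $bc$,'' and this is the wrong side in exactly the situation you flag as the main obstacle. The paper instead places $b$ \emph{on the opposite side of $c$ from $c$'s other neighbor}, and this is precisely where the hypothesis that $c$ has degree two does its work: it guarantees $c$ has a unique second edge $cy$, which determines a safe side for the new position of $b$. Your justification rests on the claim that ``the uncrossed arc $bc$ bounds a region of the page met by no edge,'' which is false: an edge can lie entirely within that region without crossing $bc$, in particular the edge $cy$ when $y$ lies between $b$ and $c$ (it shares the endpoint $c$, so it never crosses $bc$).

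Concretely, take the m configuration with spine order $a,b,y,c$ and all three edges $ab$, $bc$, $cy$ on the same page, so that $cy$ is nested under $bc$; here $bc$ is uncrossed and $b,c$ both have degree two, so the lemma applies. Your rule moves $b$ to the interior side, between $y$ and $c$; the rerouted edge from $a$ to the new position of $b$ then interleaves with $cy$ (order $a,y,b,c$ with endpoints $a, b'$ versus $y, c$), creating a crossing that did not exist before, so neither the crossing count nor the set of crossed edges is preserved, and your terminal ``descent'' to the spine cannot avoid $cy$ no matter how the hugging curve is drawn. The same failure occurs in the rainbow case when $y$ lies between $c$ and $b$. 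Placing $b$ on the side of $c$ away from $y$, as in the paper's proof (see Figure~\ref{fig:mr}, which is drawn exactly with such an edge entering the nested region), repairs this: the descent at $c$ then happens in a thin neighborhood containing no edge attachment other than $bc$ itself, and the parity argument you sketch goes through edge by edge. So the fix is local---replace your placement rule by the paper's and drop the false ``empty region'' claim---but as written the rearrangement does not preserve crossings, and your concluding paragraph correctly identifies the problem without resolving it.
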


\begin{proof}
In either configuration we move vertex $b$ adjacent to vertex $c$, on the opposite side of $c$ from its other neighbor, as demonstrated in Figure~\ref{fig:mr}. Since the edge $bc$ is uncrossed this transformation does not change the crossing structure of the drawing. Now that $b$ and $c$ are placed next to each other the edge $bc$ may be contracted.
\end{proof}

\begin{figure}[ht]
\centering
\includegraphics[width=0.465\textwidth]{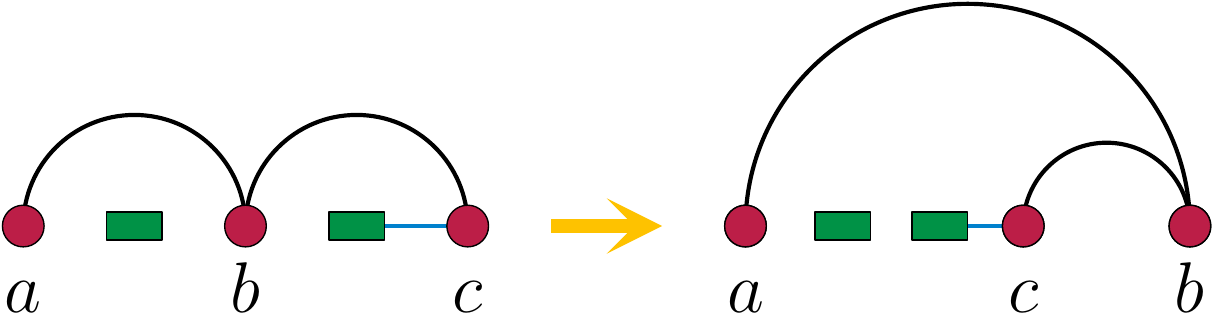}\hfill
\includegraphics[width=0.465\textwidth]{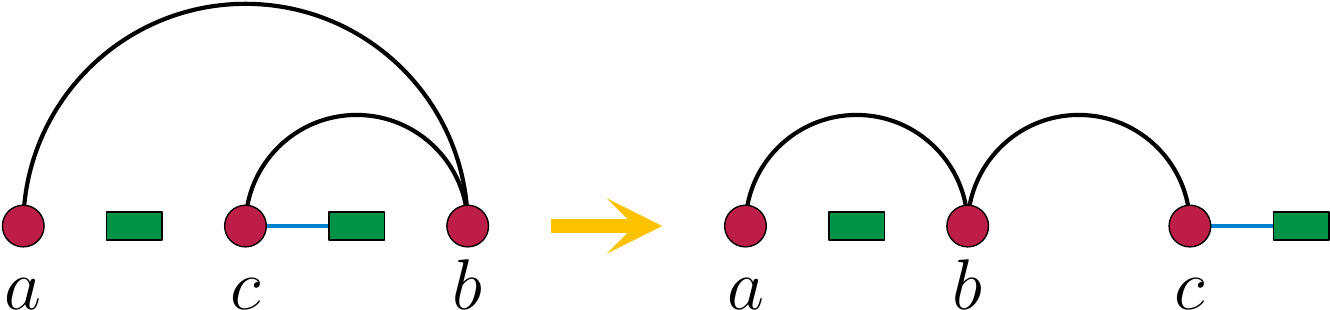}
\caption{The m reduction (left) and the rainbow reduction (right) shown with an edge into the $\beta$ region.}\label{fig:mr}
\end{figure}

\begin{figure}[ht]
\centering
\includegraphics[width=0.465\textwidth]{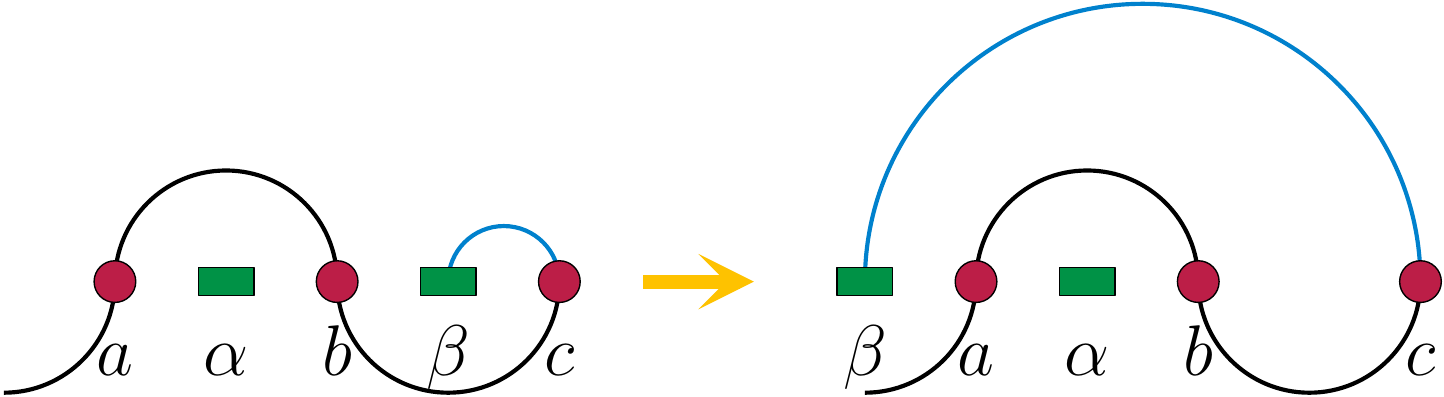}\hfill
\includegraphics[width=0.465\textwidth]{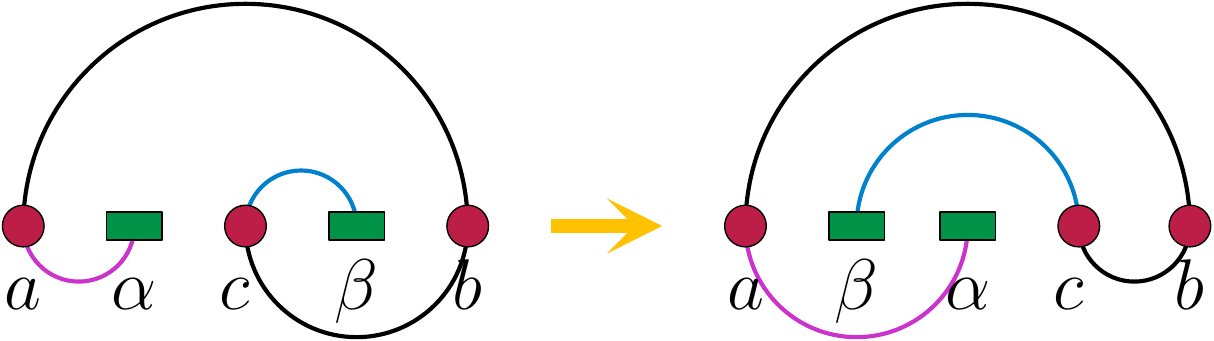}
\caption{The s reduction (left) shown with an edge into $\beta$ and the spiral reduction (right) shown with edges into $\alpha$ and $\beta$.}
\label{fig:s}
\end{figure}

\begin{lemma}
If a layout contains a pair of uncrossed edges $ab$ and $bc$ of s  or spiral type,  with  $a$, $b$, and $c$ all having degree two, then the layout can be reduced without changing its crossings by a rearrangement and contraction.
\end{lemma}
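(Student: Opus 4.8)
The plan is to follow the same template as Lemma~\ref{lem:mr}: exhibit a rearrangement of the spine that leaves the number of crossings unchanged and brings two of the path vertices into consecutive position joined by an uncrossed edge, and then contract that edge using the observation preceding Lemma~\ref{lem:crossed-bound} (contracting an uncrossed edge between consecutive vertices changes neither the crossing number nor the number of crossed edges). The new difficulty relative to the m and rainbow cases is that the two edges $ab$ and $bc$ now lie on opposite pages (s type) or span each other's endpoints (spiral type), so simply sliding the middle vertex $b$ no longer keeps the affected portion of the drawing inside a single region bounded by one uncrossed arc. This is exactly why the present hypotheses require all three of $a$, $b$, and $c$ to have degree two, rather than only $b$ and $c$ as in Lemma~\ref{lem:mr}.

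I would first invoke the stated horizontal and vertical symmetries to fix one canonical orientation of each type, so that it suffices to verify the two pictures drawn in Figure~\ref{fig:s}. In each canonical picture the two uncrossed path edges together with the spine cut the neighborhood of the path into two regions, which I will call $\alpha$ and $\beta$. Because $a$ and $c$ have degree two, the only edges leaving the path are the single pendant edges at $a$ and at $c$, and because $ab$ and $bc$ are both uncrossed, no other edge of the drawing crosses either arc; hence the entire interaction of the rearrangement with the rest of the drawing is captured by recording which of $\alpha$, $\beta$ these (at most two) pendant edges enter. This is precisely the bookkeeping displayed in Figure~\ref{fig:s}, where the s reduction is drawn with one edge into $\beta$ and the spiral reduction with edges into both $\alpha$ and $\beta$. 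I would then slide $b$ so as to make two of the path vertices consecutive, rerouting the pendant edge(s) along the tracks formerly occupied by the uncrossed path edges; since those edges were uncrossed, the regions swept by the slide contain no edge interiors, so the slide induces a bijection between the crossings of the old drawing and those of the new one. Once two degree-two path vertices are adjacent and joined by an uncrossed edge, that edge is contracted, completing the reduction.

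The main obstacle is the crossing-preservation check in the middle step: one must verify, across every placement of the pendant edges of $a$ and $c$ into $\alpha$ and $\beta$ and across both page assignments, that the slide is a crossing-preserving bijection. This is routine but must be carried out exhaustively over the handful of cases, and it is exactly where the degree-two assumption on $a$ (unnecessary in Lemma~\ref{lem:mr}) becomes essential, since a higher-degree $a$ could emit further edges into the swept region and destroy the bijection.
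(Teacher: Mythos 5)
There is a genuine gap, and it sits at the center of your argument: the claim that ``the regions swept by the slide contain no edge interiors,'' so that the whole interaction with the rest of the drawing is captured by the two pendant edges at $a$ and $c$. That is false in general. Uncrossedness of $ab$ and $bc$ does \emph{not} empty the spine intervals between the path vertices: the sets the paper calls $\alpha$ (vertices between $a$ and $b$) and $\beta$ (vertices between $b$ and $c$) may contain arbitrarily many other vertices, with edges among themselves and with edges that nest over or under the uncrossed arcs and connect to the rest of the graph --- the ``edge into $\beta$'' in Figure~\ref{fig:s} is exactly such an edge, landing on an intermediate vertex, not a pendant edge of $a$ or $c$. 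Consequently your slide of $b$ fails: to make two path vertices consecutive, $b$ must pass an entire block of such vertices, and because $ab$ and $bc$ lie on opposite pages (s type) you cannot reroute the dragged edge along the uncrossed track, which lives on the other page. Concretely, in the s configuration, sliding $b$ rightward past $\beta$ stretches the upper edge $ab$ over $\beta$ and creates a crossing with any upper-page edge from a vertex of $\beta$ to a vertex right of $c$ --- a crossing that did not exist before, so there is no bijection.

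The paper's proof instead moves the \emph{block} $\beta$, and this requires two ingredients absent from your proposal. First, it pins down the pages of the pendant edges: if $xa$ were on the same page as $ab$, or $cy$ on the same page as $bc$, the pair would form an m or rainbow configuration reducible by Lemma~\ref{lem:mr}; so one may assume $xa$ is lower and $cy$ upper. This, not a claim about swept regions, is where $\deg(a)=2$ is actually used (Lemma~\ref{lem:mr} needs both endpoints of the contracted pair to have degree two). Second, from $cy$ upper and $bc$ uncrossed it deduces that $\beta$ has \emph{no incoming lower-page edges}, which is what licenses relocating all of $\beta$ --- to the left of $a$ in the s case (safe in the upper page precisely because $xa$ is lower), and between $a$ and $\alpha$ in the spiral case, where the destination is chosen carefully because $\beta$ may have upper-page edges to $\alpha$ that must not be twisted. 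After the block move, two path vertices become consecutive and the uncrossed edge between them is contracted. Your exhaustive case analysis over pendant-edge placements cannot substitute for this: the cases that make the lemma nontrivial are indexed by the contents of $\alpha$ and $\beta$, which your setup has assumed away.
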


\begin{proof}
We assume by symmetry that $a$ is the leftmost of the three vertices, edge $ab$ is in the upper page, and edge $bc$ is in the lower page.
Let $x$ be the neighbor of $a$ that is not $b$ and let $y$ be the neighbor of $c$ that is not $b$. 
We may assume that edge $xa$ is in the lower page, for if it were in the upper page then  edge $ab$ would be part of an m or rainbow configuration and could be reduced by Lemma~\ref{lem:mr}. By the same reasoning we may assume that $cy$ is in the upper page.

First we consider s configurations.  Let $\alpha$ be the set of vertices between $a$ and $b$, and let $\beta$ be the set of vertices between $b$ and $c$. Then $\beta$ can have no incoming edges in the lower page, because $cy$ is upper and $bc$ blocks all other edges. Therefore, we may move $\beta$ directly to the left of $a$, as in Figure~\ref{fig:s}. Since edges $ab$ and $bc$ are uncrossed this transformation does not change the crossing structure of the drawing. We can then contract edge $ab$.

For the spiral, assume by symmetry that $c$ is between $a$ and $b$.
Let $\alpha$ be the set of vertices between $a$ and $c$, and let $\beta$ be the set of vertices between $c$ and $b$. Because $cy$ is assumed to be in the upper page, and $bc$ blocks all other lower edges, $\beta$ can have no incoming lower edges; however, it might have edges in the upper page connecting it to $\alpha$, so we must be careful to avoid twisting those connections and introducing new crossings. In this case, we move $\beta$ between $a$ and $\alpha$ and contract edge $bc$.
\end{proof}

As shown above, if any degree two path has at least four edges and two
consecutive uncrossed edges, then we can apply one of the reduction rules and
reduce the number of edges. For this reason we define the kernel $K$ for
computing $\pagecross_2(G)$ using the method in Section~\ref{sec:kernel}, with the bound $\ell(k)=2k^2$ on the length of the maximal degree two paths.
Similarly, we define the the kernel $L$ for computing $\pagecrossed_2(G)$ by setting $\ell(k)=2k$.

\begin{lemma} \label{lem:2page-reduce}
    Let $G$ be a graph with cyclomatic number $k$. Then,
    \begin{enumerate}
        \item $K$ has at most $6k^3$ vertices and $6k^3$ edges;
        \item $\pagecross_2(G) = \pagecross_2(K)$;
        \item $L$ has at most $6k^2$ vertices and $6k^2$ edges;
        \item $\pagecrossed_2(G) = \pagecrossed_2(L)$
    \end{enumerate}
\end{lemma}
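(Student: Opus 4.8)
The plan is to handle the size bounds (1) and (3) by direct counting and the crossing-equalities (2) and (4) as the substantive part, each split into an easy ``expansion'' inequality and a harder ``reduction'' inequality.

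For (1) and (3) I would combine Lemma~\ref{lem:2core-bound} with the chosen path-length bounds. After reducing $G$ to its $2$-core there are at most $2k-2$ vertices of degree at least three and at most $3k-3$ maximal degree-two paths, and every edge of the core lies on exactly one such path. Shortening each path to length at most $\ell(k)$ therefore leaves at most $(3k-3)\ell(k)$ edges and at most $(2k-2)+(3k-3)(\ell(k)-1)$ vertices. Substituting $\ell(k)=2k^2$ gives both the vertex and edge bounds $6k^3$ for $K$, and $\ell(k)=2k$ gives $6k^2$ for $L$; the lower-order terms are absorbed into these rounded bounds.

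For the equalities I would prove each as two inequalities. The expansion inequality $\pagecross_2(G)\le\pagecross_2(K)$ (and likewise $\pagecrossed_2(G)\le\pagecrossed_2(L)$) is the easy direction: starting from an optimal drawing of the kernel, I would lengthen each shortened path by repeatedly subdividing one of its edges in a small spine-neighborhood, rerouting the two new sub-edges along the old arc, and then reattach the trees stripped away in forming the $2$-core, again within small neighborhoods. Since subdivisions and tree attachments introduce no crossings and leave every edge's crossed/uncrossed status unchanged, this yields a drawing of $G$ with exactly the same crossings and crossed edges. The reduction inequality $\pagecross_2(K)\le\pagecross_2(G)$ is where the work lies. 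Beginning with a crossing-optimal drawing of $G$, and first reducing to the $2$-core (which, as in the one-page case, cannot increase crossings), Lemma~\ref{lem:crossed-bound} bounds the optimum, so I may assume the drawing has at most $\binom{k}{2}$ crossings and hence at most $2\binom{k}{2}<k^2$ crossed edges; for the crossed-edge problem the same lemma gives at most $k$ crossed edges. I would then argue that any degree-two path longer than $\ell(k)$ contains a reducible configuration. The key observation is that a pair of consecutive uncrossed edges is always reducible: if the two edges lie on the same page the pair has m or rainbow type and Lemma~\ref{lem:mr} applies, while if they lie on opposite pages it has s or spiral type and the s/spiral lemma applies; moreover a single uncrossed edge is already reducible by Lemma~\ref{lem:mr} as soon as one of its neighbors lies on the same page.

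Finally I would apply the relevant reduction, which contracts an uncrossed edge and hence leaves both the crossing count and the set of crossed edges unchanged while shortening the path by one; iterating drives every path to length exactly $\ell(k)$, producing a drawing of the kernel realizing the same crossings and crossed edges, which combined with the expansion inequality gives the claimed equalities. I expect the main obstacle to be the pigeonhole step that rules out a non-reducible long path: one must check that the only way to avoid both two consecutive uncrossed interior edges and a same-page crossed neighbor is an alternating crossed/uncrossed pattern, which forces the number of crossed edges up to roughly half the path length, so that the bounds $\ell(k)=2k$ and $\ell(k)=2k^2$ are precisely what the crossed-edge counts $k$ and $k^2$ permit. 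Some care is also needed so that the reducible pair can be chosen strictly inside the path, away from its degree-$\ge 3$ endpoints, where the slack between $\ell(k)$ and the pigeonhole threshold leaves room.
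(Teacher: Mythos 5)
Your proposal is correct and follows essentially the same route as the paper: the same counting via Lemma~\ref{lem:2core-bound} for the size bounds, the same expansion argument borrowed from the $1$-page kernel lemma for the easy inequality, and the same reduction argument combining the crossed-edge bound of Lemma~\ref{lem:crossed-bound} with a pigeonhole step and the m/rainbow and s/spiral reduction lemmas. In fact you spell out the pigeonhole analysis (the forced alternating crossed/uncrossed pattern and the care needed near the degree-$\geq 3$ endpoints) more explicitly than the paper does, which leaves these details implicit in its one-line justification that long paths ``can be shortened.''
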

\begin{proof}
(1) Since we have at most $2k^2$ vertices per maximal degree two path, the total number of vertices is at most $2k^2(3k-3) + 2k-2 \leq 6k^3$. The number of edges is at most $2k^2(3k-3) + (2k-2) + (k-1) \leq 6k^3$. 

(2) The proof that $\pagecross_2(G) \leq \pagecross_2(K)$ is that same as in Lemma~\ref{lem:1-page-kernel}. To see that $\pagecross_2(K) \leq \pagecross_2(G)$ we suppose that $G$ has been given an embedding that minimizes $\pagecross_2(G)$. The total number of crossings in such an embedding is bounded above by ${k \choose 2 }< k^2/2$, and in turn the number of crossed edges is less than $k^2$. Thus any maximal degree two path in $G$ with length greater than $2k^2$ can be shortened.

(3) and (4) The proof follows by the same argument as in (1) and (2), noting that there always exists a drawing with at most $k$ crossed edges.
\end{proof}

We apply the straightforward exact algorithm to the kernel of the graph to
achieve the following result:

\begin{theorem}\label{thm:2page}
If $G$ is a graph with cyclomatic number $k$, then $\pagecross_2(G)$
can be computed in $O(2^{6k^3}(6k^3)! + n)$ time, 
and $\pagecrossed_2(G)$ can be computed in $O(2^{6k^2}(6k^2)! + n)$ time. 
If $G$ is a $k$-almost-tree, then 
$\pagecross_2(G)$ and $\pagecrossed_2(G)$ can be computed in 
$O(2^{6k^3}(6k^3)!n)$
 time and 
$O(2^{6k^2}(6k^2)!n)$
time respectively.
\end{theorem}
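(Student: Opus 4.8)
The plan is to chain together three ingredients already in place: the linear-time kernelization of Section~\ref{sec:kernel}, the exactness and size guarantees of Lemma~\ref{lem:2page-reduce}, and the brute-force $O(2^m n!)$ enumeration algorithm. I would treat the cyclomatic-number bound first and then lift it to $k$-almost-trees through a block decomposition.

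For a graph $G$ of cyclomatic number $k$, I would first construct the kernel: reduce to the $2$-core in linear time, identify the maximal degree-two paths, and shorten any path exceeding the threshold $\ell(k)$, using $\ell(k) = 2k^2$ to obtain $K$ (for $\pagecross_2$) and $\ell(k) = 2k$ to obtain $L$ (for $\pagecrossed_2$). All of this runs in $O(n)$ time. By Lemma~\ref{lem:2page-reduce}, $K$ has at most $6k^3$ vertices and edges with $\pagecross_2(K) = \pagecross_2(G)$, while $L$ has at most $6k^2$ vertices and edges with $\pagecrossed_2(L) = \pagecrossed_2(G)$. Running the straightforward exact algorithm on the kernel then enumerates all $2^{m'}(n'-1)!$ configurations, where $n'$ and $m'$ are the kernel's vertex and edge counts; substituting the bounds from Lemma~\ref{lem:2page-reduce} gives $O(2^{6k^3}(6k^3)!)$ for $K$ and $O(2^{6k^2}(6k^2)!)$ for $L$. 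Adding the $O(n)$ cost of kernelization yields the first two running times.

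To handle a $k$-almost-tree I would decompose $G$ into its biconnected components using the block-cut tree, which is computable in linear time; by the definition of the $k$-almost-tree parameter each block has cyclomatic number at most $k$. The crucial structural fact is that an optimal $2$-page drawing decomposes over blocks: I claim $\pagecross_2(G)$ equals the sum of $\pagecross_2$ over the blocks (and likewise for $\pagecrossed_2$). One direction follows because one can lay the vertices out so that each block occupies a contiguous arc of the spine, nesting the child blocks of the block-cut tree into small intervals placed immediately next to the articulation vertex through which they attach; since two distinct blocks meet in at most a single (articulation) vertex, no edge of one block can cross an edge of another, so the costs simply add. The reverse inequality is immediate, because restricting any drawing of $G$ to the vertices and edges of a single block yields a valid $2$-page drawing of that block. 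Granting this decomposition, I apply the cyclomatic-number algorithm to each block in $O(2^{6k^3}(6k^3)! + n_i)$ time (respectively $O(2^{6k^2}(6k^2)! + n_i)$ for crossed edges); the $n_i$ terms sum to $O(n)$, and since there are $O(n)$ blocks the factorial term is multiplied by at most $n$, giving the claimed $O(2^{6k^3}(6k^3)! n)$ and $O(2^{6k^2}(6k^2)! n)$ bounds.

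The step I expect to require the most care is the block-decomposition claim in the $k$-almost-tree case. The paper asserts in Section~\ref{sec:kernel} that biconnected components ``can be embedded separately without introducing crossings,'' but for $2$-page drawings this deserves an explicit nesting argument: one must verify that placing each block in a contiguous arc, with descendant blocks recursively inserted adjacent to their shared articulation vertex, never forces an edge of an ancestor block to span the interval of a descendant block in a way that creates an unavoidable crossing. Once this independence is established the remainder is routine accounting, and the asymmetry between the $K$ ($\ell(k)=2k^2$) and $L$ ($\ell(k)=2k$) bounds propagates directly from Lemma~\ref{lem:2page-reduce} into the two pairs of running times.
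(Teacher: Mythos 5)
Your proposal matches the paper's argument essentially exactly: kernelize via Lemma~\ref{lem:2page-reduce}, run the straightforward $O(2^m n!)$ enumeration on the kernel, and handle $k$-almost-trees by decomposing into biconnected components as described in Section~\ref{sec:kernel}. The only difference is that you spell out the block-nesting argument (contiguous arcs adjacent to articulation vertices, so parent edges nest over rather than cross child edges) that the paper leaves implicit, which is a correct and welcome elaboration rather than a different route.
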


\section{Matrix multiplication improvement}
\label{sec:matmult}

\begin{wrapfigure}[10]{r}{0.33\textwidth} 
\vspace{-3.78em}
\centering
\includegraphics[width=0.35\textwidth]{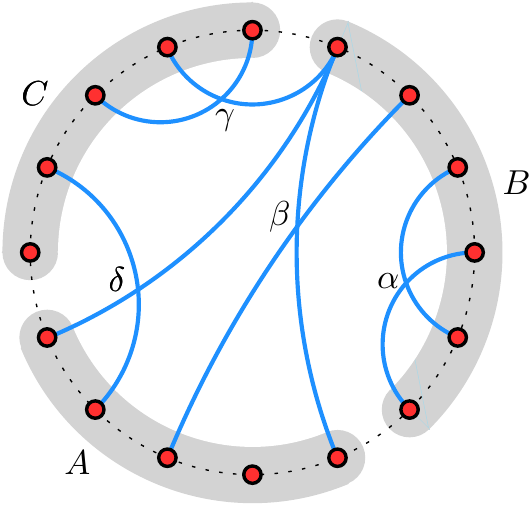}
\vspace{-2.1em}
\caption{Types of crossings.}
\label{fig:fpt-3part}
%\vspace{-2em}
\end{wrapfigure}
The asymptotic run time for processing each biconnected component in both the
one page and two page cases can be further improved 
using matrix multiplication to find the minimum weight triangle in a graph\cite{Wil-TCS-05}. 

We begin with the 1-page case, in which we improve the run time to 
$O(k^{O(1)}(5k)!^{\omega/3})$ where $\omega<2.3727$ is the exponent for matrix multiplication~\cite{Wil-STOC-12}.
Let $N \leq 5k$ be the number of vertices in the kernel $K$, and for simplicity of exposition, 
assume that $N$ is a multiple of $3$. 
We construct a new graph $G'$ as follows. For each subset $S \subset K$ of $N/3$
vertices in the original kernel $K$, and for each ordering of $S$, we create one vertex in
$G'$. Thus, the number of vertices in $G'$ is 
\(
    (N/3)!\cdot{N \choose N/3} = O\left((N!)^{1/3}\right).
    \)
We add  edges in $G'$ between pairs of vertices that represent disjoint subsets.  $G'$ has a triangle for every triple of subsets that
form a proper partition of $V$ in $G$. Thus, each triangle corresponds to an
assignment of the vertices to three uniformly sized regions and a distinct ordering of the
vertices in each region, which together form a complete layout of $G$.

We assign a weight to each edge in $G'$ based on the
number of edge crossings in $G$ between the vertices in the corresponding
regions.
There are four possible types of crossing, represented by $\alpha$, $\beta$,
$\gamma$ and $\delta$ in Figure~\ref{fig:fpt-3part}. For a crossing of type
$\alpha$, in which all endpoints of a pair of crossing edges in $G$ are contained in the
same region $B$,
we add $1/2$ to the weights of edges $AB$ and $BC$ in $G'$. 
For $\beta$, in which a pair of crossing edges in $G$ both start in a region $A$ and
end in another region $B$, we add $1$ to the weight of edge $AB$ in $G'$. 
For $\gamma$, in which three endpoints of a pair of edges lie in the same
region $C$, and the fourth lies in a different region $B$, we add $1$ to the
weight of edge $BC$ in $G'$. 
Finally, for $\delta$, in which a pair of crossed edges both have an endpoint in
one region $A$, but their other endpoint in two different regions $B$ and $C$, 
we add $1/2$ to the weight of edge $AC$ and $1/2$ to the weight of edge $AB$ in
$G'$. With these weights, the total weight of a 
triangle in $G'$ equals the number of edge crossings in the corresponding layout. 
The edge weights for $G'$ can be computed in $O(k^{O(1)} (5k)!^{2/3})$ time.

To find the minimum weight triangle we construct the weighted adjacency matrix $A$, where $A_{i,j}$ is given the weight of the edge from $i$ to $j$ or infinity if no such edge exists. We then compute the min-plus matrix product of $A$ with itself, which is defined by $[A \star A]_{i,j} = \min_k A_{i,k} + A_{k,j}$. The weight of a minimum weight triangle in $A$ then corresponds to the minimum entry in $A + A \star A$. From the minimum weight and corresponding $i$ and $j$ the triangle can be found in linear time. Thus, the runtime is dominated by computing $A \star A$, which can be done in $O(k^{O(1)}(5k)!^{\omega/3})$ time using fast matrix multiplication~\cite{AloGalMar-JCSS-97, Yuv-IPL-76}.

For the $2$-page case we consider each of the $2^M$ edge page assignments separately, computing the minimum crossing drawing for this assignment using matrix multiplication. As before we construct a graph $G'$ with weighted edges between compatible vertices, such that a minimum weight triangle in $G'$ corresponds to a minimum weight drawing. Matrix multiplication is then used to find this minimal weight triangle for each page assignment, yielding a running time of $O(2^M(N!)^{\omega/3})$, where $N$ is the number of vertices and $M$ is the number of edges in the kernel. Thus, we have the following
result: 

 \begin{theorem}\label{thm:matrix}
If $G$ is a graph with cyclomatic number $k$, then we can compute:
\begin{itemize}
    \item $\pagecross_1(G)$ and $\pagecrossed_1(G)$ in $O(k^{O(1)}(5k)!^{\omega/3} + n)$ time;
    \item $\pagecross_2(G)$ in $O(2^{6k^3}(6k^3)!^{\omega/3}
        + n)$ time;
    \item $\pagecrossed_2(G)$ in $O(2^{6k^2}(6k^2)!^{\omega/3} + n)$
        time.
\end{itemize}
\end{theorem}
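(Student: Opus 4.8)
The plan is to combine the kernelizations of the previous sections with the minimum-weight-triangle construction sketched in this section, so that the only quantity that changes between the three bullet points is the size of the kernel and, for the two-page variants, the presence of a factor counting page assignments. First I would apply Lemma~\ref{lem:1-page-kernel} (for the one-page quantities) or Lemma~\ref{lem:2page-reduce} (for the two-page quantities) to replace $G$ by a kernel $K$ (or $L$) whose crossing number equals that of $G$; this takes $O(n)$ time and accounts for the additive $+n$ term. After this reduction it suffices to bound the time to compute the relevant crossing number of a kernel on $N$ vertices and $M$ edges, where $N,M \le 5k$ in the one-page case, $N,M \le 6k^3$ for $\pagecross_2$, and $N,M \le 6k^2$ for $\pagecrossed_2$.

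On the kernel I would run the construction of this section. I partition the $N$ vertices into three consecutive circular arcs of $N/3$ vertices each (padding with a bounded number of isolated vertices when $N$ is not divisible by $3$), build the graph $G'$ whose vertices are the ordered $N/3$-subsets and whose edges join disjoint subsets, and observe that triangles of $G'$ are exactly the complete circular layouts of the kernel. I then weight each edge of $G'$ by the crossing contributions of types $\alpha,\beta,\gamma,\delta$ as described above, so that the weight of a triangle equals the number of crossings (or crossed edges) of the corresponding layout. Because the number of crossings of any layout of the kernel is at most $\binom{M}{2}$, all edge weights are half-integers bounded by a polynomial in $k$; I can therefore scale them to small integers and compute the minimum-weight triangle by a single min-plus self-product of the weighted adjacency matrix, which reduces to fast (ring) matrix multiplication over small integers via the encodings of~\cite{AloGalMar-JCSS-97,Yuv-IPL-76,Wil-TCS-05}. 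The matrix has $O((N!)^{1/3})$ rows and columns, so this step runs in $O(k^{O(1)}(N!)^{\omega/3})$ time, dominating the $O(k^{O(1)}(N!)^{2/3})$ cost of filling in the weights.

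The step I expect to require the most care is verifying that these weights make the triangle weight exactly the crossing count, and in particular the type-$\delta$ case, where a crossing is charged $1/2$ to each of two triangle edges. Here the two crossing edges share the region $A$ but reach two different regions $B$ and $C$, so whether the pair crosses genuinely depends on all three arcs; the point to check is that, within a triangle, the third region is forced to be the set-complement of the other two, so each of the edges $AB$ and $AC$ can recover the crossing from its own two regions, and the two charges of $1/2$ sum to the single unit the crossing should contribute. One must likewise confirm that types $\alpha$, $\beta$, and $\gamma$ are each counted once and that no crossing is counted under two different types, and that every layout of the kernel arises as some triangle, so that rotations and reflections merely reproduce layouts of equal crossing number and the over-counting is harmless for a minimization.

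Finally I would assemble the running times. For the one-page quantities the above gives $O(k^{O(1)}(5k)!^{\omega/3} + n)$ directly. For the two-page quantities I additionally enumerate all $2^M$ assignments of the kernel's edges to the two pages and run the triangle computation once per assignment, where the crossing structure of a fixed page assignment (only same-page edges can cross) is again captured by the weights on $G'$; multiplying the per-assignment cost by $2^M$ and substituting $N=M\le 6k^3$ for $\pagecross_2$ and $N=M\le 6k^2$ for $\pagecrossed_2$ yields the two stated bounds $O(2^{6k^3}(6k^3)!^{\omega/3}+n)$ and $O(2^{6k^2}(6k^2)!^{\omega/3}+n)$.
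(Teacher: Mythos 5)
Your proposal is correct and follows essentially the same route as the paper: kernelize via Lemma~\ref{lem:1-page-kernel} or Lemma~\ref{lem:2page-reduce}, build the graph $G'$ on ordered $N/3$-subsets with the $\alpha,\beta,\gamma,\delta$ weighting so that triangle weight equals crossing count, find the minimum-weight triangle by a min-plus self-product computed with fast matrix multiplication, and multiply by $2^M$ page assignments in the two-page cases. You in fact supply details the paper elides (padding when $3 \nmid N$, scaling half-integer weights, and the consistency check for type-$\delta$ charges), but the argument is the same one.
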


\section{Conclusion}
We have given new fixed parameter algorithms for computing the minimum number of
edge crossings and minimum number of crossed edges in $1$-page and $2$-page
embeddings of $k$-almost trees. To our knowledge, these are the only
parameterized exact algorithms for these drawing styles. 

We leave the following questions open to future research:
\begin{itemize}
\item For $2$-page embeddings, the hardness of finding uncrossed drawings~\cite{ChuLeiRos-SJADM-87} shows that crossing minimization cannot be FPT in its natural parameter, the number of crossings. What about $1$-page embeddings?
\item Can the dependence on $k$ be reduced to singly exponential?
\item What other $\NP$-hard problems in graph drawing are FPT with respect to $k$?
\end{itemize}

\subsubsection{Acknowledgements.}
We thank Emma S. Spiro whose work with social networks led us to consider drawing almost-trees, and an anonymous reviewer for helpful suggestions in our 2-page kernelization. This research was supported in part by the National Science Foundation under grant 0830403 and 1217322, and by the Office of Naval Research under MURI grant N00014-08-1-1015.

{\raggedright
\bibliographystyle{abuser}
\bibliography{drawing-almost-trees}}

\begin{thebibliography}{10}

\bibitem{AkuHayChi-JTB-07}
T.~Akutsu, M.~Hayashida, W.-K. Ching, and M.~K. Ng.
\newblock {Control of Boolean networks: Hardness results and algorithms for
  tree structured networks}.
\newblock {\em J. Theor. Bio.} 244(4):670{--}679, 2007,
  \href{http://dx.doi.org/10.1016/j.jtbi.2006.09.023}%
{doi:10.1016/j.jtbi.2006.09.023}.

\bibitem{AloGalMar-JCSS-97}
N.~Alon, Z.~Galil, and O.~Margalit.
\newblock {On the exponent of the all pairs shortest path problem}.
\newblock {\em J. Comput. Sys. Sci.} 54(2):255{--}262, 1997,
  \href{http://dx.doi.org/10.1006/jcss.1997.1388}%
{doi:10.1006/jcss.1997.1388}.

\bibitem{BanCabEpp-WADS-2013}
M.~J. Bannister, S.~Cabello, and D.~Eppstein.
\newblock {Parameterized complexity of 1-planarity}.
\newblock {\em Algorithms and Data Structures Symposium (WADS 2013)},
  pp.~97{--}108. Springer, LNCS 8037, 2013,
  \href{http://arxiv.org/abs/1304.5591}{arXiv:1304.5591}.

\bibitem{BauBra-GTCCS-05}
M.~Baur and U.~Brandes.
\newblock {Crossing reduction in circular layouts}.
\newblock {\em Graph-Theoretic Concepts in Computer Science (WG 2004)},
  pp.~332{--}343. Springer, LNCS 3353, 2005,
  \href{http://dx.doi.org/10.1007/978-3-540-30559-0\_28}%
{doi:10.1007/978-3-540-30559-0\_28}.

\bibitem{Bod-GTCCS-94}
H.~Bodlaender.
\newblock {Dynamic algorithms for graphs with treewidth 2}.
\newblock {\em Graph-Theoretic Concepts in Computer Science (WG'93)},
  pp.~112{--}124. Springer, LNCS 790, 1994,
  \href{http://dx.doi.org/10.1007/3-540-57899-4\_45}%
{doi:10.1007/3-540-57899-4\_45}.

\bibitem{ButActMar-JSS-08}
C.~T. Butts, R.~M. Acton, and C.~S. Marcum.
\newblock {Interorganizational collaboration In the Hurricane Katrina
  response}.
\newblock {\em J. Social Structure} 13(1), 2012.

\bibitem{ChuLeiRos-SJADM-87}
F.~R.~K. Chung, F.~T. Leighton, and A.~L. Rosenberg.
\newblock {Embedding graphs in books: A layout problem with applications to
  VLSI design}.
\newblock {\em SIAM J. Alg. Disc. Meth.} 8(1):33{--}58, 1987,
  \href{http://dx.doi.org/10.1137/0608002}%
{doi:10.1137/0608002}.

\bibitem{Cop-DAM-85}
D.~Coppersmith and U.~Vishkin.
\newblock {Solving NP-hard problems in `almost trees': Vertex cover}.
\newblock {\em Discrete Appl. Math.} 10(1):27{--}45, 1985,
  \href{http://dx.doi.org/10.1016/0166-218X(85)90057-5}%
{doi:10.1016/0166-218X(85)90057-5}.

\bibitem{DeSinWon-STI-04}
P.~De, A.~E. Singh, T.~Wong, W.~Yacoub, and A.~M. Jolly.
\newblock {Sexual network analysis of a gonorrhoea outbreak}.
\newblock {\em Sexually transmitted infections} 80(4):280{--}5, August 2004,
  \href{http://dx.doi.org/10.1136/sti.2003.007187}%
{doi:10.1136/sti.2003.007187}.

\bibitem{Fer-TSE-89}
D.~Fern{\'a}ndez-Baca.
\newblock {Allocating modules to processors in a distributed system}.
\newblock {\em Software Engineering, IEEE Transactions on} 15(11):1427{--}1436,
  1989, \href{http://dx.doi.org/10.1109/32.41334}%
{doi:10.1109/32.41334}.

\bibitem{JirKloKra-DAM-01}
J.~Fiala, T.~Kloks, and J.~Kratochv{\'\i}l.
\newblock {Fixed-parameter complexity of {\mbox{$\lambda$}}-labelings}.
\newblock {\em Discrete Appl. Math.} 113(1):59{--}72, 2001,
  \href{http://dx.doi.org/10.1016/S0166-218X(00)00387-5}%
{doi:10.1016/S0166-218X(00)00387-5}.

\bibitem{Fur-FUN-12}
M.~F{\"u}rer.
\newblock {Counting perfect matchings in graphs of degree 3}.
\newblock {\em Fun with Algorithms (FUN 2012)}, pp.~189{--}197. Springer, LNCS
  7288, 2012, \href{http://dx.doi.org/10.1007/978-3-642-30347-0\_20}%
{doi:10.1007/978-3-642-30347-0\_20}.

\bibitem{Gro-STOC-01}
M.~Grohe.
\newblock {Computing crossing numbers in quadratic time}.
\newblock {\em ACM Symp. Theory of Computing (STOC 2001)}, pp.~231{--}236,
  2001, \href{http://dx.doi.org/10.1145/380752.380805}%
{doi:10.1145/380752.380805}.

\bibitem{GurStoVis-JACM-84}
Y.~Gurevich, L.~Stockmeyer, and U.~Vishkin.
\newblock {Solving NP-hard problems on graphs that are almost trees and an
  application to facility location problems}.
\newblock {\em J. ACM} 31(3):459{--}473, June 1984,
  \href{http://dx.doi.org/10.1145/828.322439}%
{doi:10.1145/828.322439}.

\bibitem{KawRee-STOC-07}
K.-i. Kawarabayashi and B.~Reed.
\newblock {Computing crossing number in linear time}.
\newblock {\em ACM Symp. Theory of Computing (STOC 2007)}, pp.~382{--}390,
  2007, \href{http://dx.doi.org/10.1145/1250790.1250848}%
{doi:10.1145/1250790.1250848}.

\bibitem{KloBodMul-ESA-93}
T.~Kloks, H.~Bodlaender, H.~M{\"u}ller, and D.~Kratsch.
\newblock {Computing treewidth and minimum fill-in: All you need are the
  minimal separators}.
\newblock {\em Algorithms (ESA'93)}, pp.~260{--}271. Springer, LNCS 726, 1993,
  \href{http://dx.doi.org/10.1007/3-540-57273-2\_61}%
{doi:10.1007/3-540-57273-2\_61}.

\bibitem{LomHob-03}
M.~Lombardi and R.~Hobbs.
\newblock {\em {Mark Lombardi: Global Networks}}.
\newblock Independent Curators, 2003.

\bibitem{MasKasNakFuj-ISCS-87}
S.~Masuda, T.~Kashiwabara, K.~Nakajima, and T.~Fujisawa.
\newblock {On the NP-completeness of a computer network layout problem}.
\newblock {\em 20th IEEE Symp. Circuits and Systems} pp.~292{--}295, 1987.

\bibitem{mnkf-cmleg-90}
S.~Masuda, K.~Nakajima, T.~Kashiwabara, and T.~Fujisawa.
\newblock {Crossing minimization in linear embeddings of graphs}.
\newblock {\em IEEE Trans. Computers} 39(1):124{--}127, 1990,
  \href{http://dx.doi.org/10.1109/12.46286}%
{doi:10.1109/12.46286}.

\bibitem{NooMrvBat-12}
W.~de~Nooy, A.~Mrvar, and V.~Batagelj.
\newblock {\em {Exploratory Social Network Analysis with Pajek}}.
\newblock Cambridge University Press, 2012.

\bibitem{PacTot-JCTB-00}
J.~Pach and G.~T{\'o}th.
\newblock {Which crossing number is it anyway?}
\newblock {\em J. Combin. Theory Ser. B} 80(2):225{--}246, 2000,
  \href{http://dx.doi.org/10.1006/jctb.2000.1978}%
{doi:10.1006/jctb.2000.1978}.

\bibitem{PelSchSte-GD-07}
M.~J. Pelsmajer, M.~Schaefer, and D.~{\v{S}}tefankovi{\v{c}}.
\newblock {Crossing numbers and parameterized complexity}.
\newblock {\em Graph Drawing (GD 2007)}, pp.~31{--}36. Springer, LNCS 4875,
  2008, \href{http://dx.doi.org/10.1007/978-3-540-77537-9\_6}%
{doi:10.1007/978-3-540-77537-9\_6}.

\bibitem{PotMutRot-STI-02}
J.~J. Potterat, S.~Q. Muth, R.~B. Rothenberg, H.~Zimmerman-Rogers, D.~L. Green,
  J.~E. Taylor, M.~S. Bonney, and H.~A. White.
\newblock {Sexual network structure as an indicator of epidemic phase}.
\newblock {\em Sexually transmitted infections} 78 Suppl 1:i152--8, April 2002,
  \href{http://dx.doi.org/10.1136/sti.78.suppl\_1.i152}%
{doi:10.1136/sti.78.suppl\_1.i152}.

\bibitem{PotPhiPlu-STI-02}
J.~J. Potterat, L.~Phillips-Plummer, S.~Q. Muth, R.~B. Rothenberg, D.~E.
  Woodhouse, T.~S. Maldonado-Long, H.~P. Zimmerman, and J.~B. Muth.
\newblock {Risk network structure in the early epidemic phase of HIV
  transmission in Colorado Springs}.
\newblock {\em Sexually transmitted infections} 78 Suppl 1:i159--63, April
  2002, \href{http://dx.doi.org/10.1136/sti.78.suppl\_1.i159}%
{doi:10.1136/sti.78.suppl\_1.i159}.

\bibitem{Sed-ACS-77}
R.~Sedgewick.
\newblock {Permutation generation methods}.
\newblock {\em ACM Comput. Surv.} 9(2):137{--}164, 1977,
  \href{http://dx.doi.org/10.1145/356689.356692}%
{doi:10.1145/356689.356692}.

\bibitem{Sei-SN-83}
S.~B. Seidman.
\newblock {Network structure and minimum degree}.
\newblock {\em Social Networks} 5(3):269{--}287, 1983,
  \href{http://dx.doi.org/10.1016/0378-8733(83)90028-X}%
{doi:10.1016/0378-8733(83)90028-X}.

\bibitem{WatStr-Nat-98}
D.~Watts and S.~Strogatz.
\newblock {Collective dynamics of `small-world' networks}.
\newblock {\em Nature} 393:440{--}442, 1998.

\bibitem{Wil-TCS-05}
R.~Williams.
\newblock {A new algorithm for optimal 2-constraint satisfaction and its
  implications}.
\newblock {\em Theor. Comput. Sci.} 348(2-3):357{--}365, 2005,
  \href{http://dx.doi.org/10.1016/j.tcs.2005.09.023}%
{doi:10.1016/j.tcs.2005.09.023}.

\bibitem{Wil-STOC-12}
V.~V. Williams.
\newblock {Multiplying matrices faster than Coppersmith{--}Winograd}.
\newblock {\em ACM Symp. Theory of Computing (STOC 2012)}, pp.~887{--}898,
  2012, \href{http://dx.doi.org/10.1145/2213977.2214056}%
{doi:10.1145/2213977.2214056}.

\bibitem{Yuv-IPL-76}
G.~Yuval.
\newblock {An algorithm for finding all shortest paths using $N^{2.81}$
  infinite-precision multiplications}.
\newblock {\em Inf. Proc. Lett.} 4(6):155{--}156, 1976,
  \href{http://dx.doi.org/10.1016/0020-0190(76)90085-5}%
{doi:10.1016/0020-0190(76)90085-5}.

\end{thebibliography}

\end{document}